\documentclass{amsart}%
\usepackage{amsfonts}
\usepackage{amsmath}
\usepackage{amssymb}
\usepackage{graphicx}%
\setcounter{MaxMatrixCols}{30}
\providecommand{\U}[1]{\protect\rule{.1in}{.1in}}
\newtheorem{theorem}{Theorem}
\theoremstyle{plain}
\newtheorem{acknowledgement}{Acknowledgement}

\newtheorem{corollary}{Corollary}

\numberwithin{equation}{section}
\begin{document}
\title{The Amplified Quantum Fourier Transform:\\solving the local period problem}
\author{David J. Cornwell}
\curraddr{David J. Cornwell (PhD Student)\\
Department of Mathematics\\
University of Maryland Baltimore County\\
1000 Hilltop Circle\\
Baltimore, MD 21250}
\email{David J. Cornwell: dave.cornwell@yahoo.com}
\date{September 30th, 2010, Accepted into QIP Journal August 2nd, 2012}
\subjclass[2000]{Primary 05C38, 15A15; Secondary 05A15, 15A18}
\keywords{Quantum Fourier Transform, Amplitude Amplification, Oracle, Period Finding,
Shor Algorithm, Grover Algorithm}

\begin{abstract}
This paper creates and analyses a new quantum algorithm called the Amplified
Quantum Fourier Transform (Amplified-QFT) for solving the following problem:

\textbf{The Local Period Problem: }Let $L=\{0,1...N-1\}$ be a set of $N$
labels and let $A$ be a subset of $M$ labels of period $P$, i.e. a subset of
the form $A=\{j:j=s+rP,r=0,1...M-1\}$ where $P\leq\sqrt{N\text{ }}$and $M<<N$,
and where $M$ is assumed known. Given an oracle $f:L\rightarrow\{0,1\}$ which
is $1$ on $A$ and $0$ elsewhere, find the local period $P.$ A separate
algorithm finds the offset $s$.

The first part of this paper defines the Amplified-QFT algorithm. The second
part of the paper summarizes the main results and compares the Amplified-QFT
algorithm against the Quantum Fourier Transform (QFT) and Quantum Hidden
Subgroup (QHS) algorithms when solving the local period problem. It is shown
that the Amplified-QFT algorithm is, on average, quadratically faster than
both the QFT and\ QHS algorithms. The third part of the paper provides the
detailed proofs of the main results, describes the method of recovering $P$
from an observation $y$ and describes the algorithm for finding the offset $s$.

\begin{acknowledgement}
I would like to thank my advisor, Professor Samuel J. Lomonaco of the CSEE
Department of the University of Maryland Baltimore County (UMBC) for his
guidance and help in the writing of this paper.

\end{acknowledgement}
\end{abstract}
\maketitle

\section{\textbf{Introduction}}

This paper creates and analyzes a new quantum algorithm called the Amplified
Quantum Fourier Transform (Amplified-QFT) for solving the following problem:

\textbf{The Local Period Problem: }Let $L=\{0,1...N-1\}$ be a set of $N$
labels and let $A$ be a subset of $M$ labels of period $P$, i.e. a subset of
the form $A=\{j:j=s+rP,r=0,1...M-1\}$ where $P\leq\sqrt{N\text{ }}$and $M<<N
$, and where $M$ is assumed known. Given an oracle $f:L\rightarrow\{0,1\}$
which is $1$ on $A$ and $0$ elsewhere, find the local period $P.$ A separate
algorithm finds the offset $s$.

The first part of this paper defines the Amplified-QFT algorithm. The second
part of the paper summarizes the main results and compares the Amplified-QFT
algorithm against the Quantum Fourier Transform (QFT) and Quantum Hidden
Subgroup (QHS) algorithms when solving the local period problem. It is shown
that the Amplified-QFT algorithm is, on average, quadratically faster than
both the QFT and\ QHS algorithms. The third part of the paper provides the
detailed proofs of the main results, describes the method of recovering $P$
from an observation $y$ and describes the algorithm for finding the offset $s
$.

\section{Background-Amplitude Amplification}

In ref[4] Lov Grover specified a quantum search algorithm that searched for a
single marked element $x0$ in an $N$ long list $L$. An oracle $f:L\rightarrow
\{0,1\}$ is used to mark the element such that $f(x0)=1$ and $f$ is $0$
elsewhere. Grover's quantum algorithm finds the element with a work factor of
$O(\sqrt{N})$ whereas on a classical computer this would take $O(N)$, thereby
obtaining a quadratic speedup. Grover's algorithm can be summarized as follows:

a) Initialize the state to be the uniform superposition state $|\psi>=H|0>$
where $H$ is the Hadamard transform.

b) Reflect the current state about the plane orthogonal to the state $|x0>$ by
using the operator $(I-2|x0><x0|).$

c)\ Reflect the new state back around $|\psi>$ by using the operator $(2$
$|\psi>$ $<\psi|-I)$. This operator is a reflection about the average of the
amplitudes of the new state.

d) Repeat steps b) and c) $O(\sqrt{N})$ times until most of the probability is
on $|x0>.$

e) Measure the resulting state to obtain $x0.$

Also in ref[4], Grover suggested this algorithm could be extended to the case
of searching for an element in a subset $A$ of $M$ marked elements\ in an $N$
long list $L$. Once again an oracle $f:L\rightarrow\{0,1\}$ is used to mark
the elements of the subset $A.$Grover's algorithm solves this problem with a
work factor of $O(\sqrt{N/M})$. The elements of the set $A$ are sometimes
referred to as "good" and the elements not in $A$ are called "bad". Grover's
algorithm for this problem can be summarized as follows:

a) Initialize the state to be the uniform superposition state $|\psi>=H|0>$
where $H$ is the Hadamard transform.

b) Reflect the current state about the plane orthogonal to the state $|xgood>
$ by using the operator $(I-2|xgood><xgood|)$, where $xgood$ is the normalized
sum of the good states defined by the set $A.$This changes the sign of the
amplitudes of the good states defined by $A.$

c)\ Reflect the new state back around $|\psi>$ by using the operator $(2$
$|\psi>$ $<\psi|-I)$.

d) Repeat steps b) and c) $O(\sqrt{N/M})$ times until most of the probability
is on the set $A.$

e) Measure the resulting state to obtain an element in the set $A.$

Both versions of Grover's algorithm are also known as Amplitude Amplification
algorithms. which are generalized even further in ref [9]. The first part of
the Amplified-QFT algorithm consists of the second of these algorithms, except
for the final measurement step e).

\section{Background-Period Finding}

In ref[3], Peter Shor describes a quantum algorithm to solve the factorization
problem with exponential speed up over classical approaches. He translates the
factorization problem into one of finding the period of the function
$a^{x}ModN$ where $N$ is the number to be factored and $\gcd(a,N)=1$. The
period is found by making use of the QFT. Shor's factorization algorithm is
summarized below:

a) Find $Q:N^{2}\leq Q<2N^{2}$

b) Find $a:\gcd(a,N)=1$

c) Find the period of $a^{x}ModN$ using the QFT and using the $Qt\dot{h}$ root
of unity

\qquad- Form the superposition $\frac{1}{\sqrt{Q}}\sum|x>|a^{x}ModN>$

\qquad- Apply the QFT to the first register $|x>\rightarrow\sum\omega^{xy}|y>$

\qquad- Measure $y$

\qquad- Form the continued fraction expansion of $y/Q$ to find $d/P$

\qquad- If $|y/Q-d/P|<1/2N^{2}$ and $\gcd(d,P)=1$ then $\ P$ is recovered

d) If the period is not even start over

e) If $a^{P/2}+1=0ModN$ start over

f) Find $\gcd(a^{p/2}-1,N)$ to find the factor of $N.$

Step c) is the quantum part of Shor's factorization algorithm. We make use of
the QFT and continued fraction expansion method to recover the period $P$ in
the second part of the Amplified-QFT algorithm.

\section{The Amplified\ Quantum Fourier Transform Algorithm}

The Amplified-QFT algorithm solves the Local Period Problem:

\textbf{The Local Period Problem: }Let $L=\{0,1...N-1\}$ be a set of $N$
labels and let $A$ be a subset of $M$ labels of period $P$, i.e. a subset of
the form $A=\{j:j=s+rP,r=0,1...M-1\}$ where $P\leq\sqrt{N\text{ }}$and $M<<N
$, and where $M$ is assumed known. Given an oracle $f:L\rightarrow\{0,1\}$
which is $1$ on $A$ and $0$ elsewhere, find the local period $P.$ A separate
algorithm finds the offset $s$.

The Amplified-QFT algorithm consists of the following steps where steps a)
through d)\ are the Amplitude Amplification steps and steps e) through h) are
the period finding steps that use the QFT:

a) Initialize the state to be the uniform superposition state $|\psi>=H|0>$
where $H$ is the Hadamard transform.

b) Reflect the current state about the plane orthogonal to the state $|xgood>
$ by using the operator $(I-2|xgood><xgood|)$, where $xgood$ is the normalized
sum of the good states defined by the set $A.$This changes the sign of the
amplitudes of the good states defined by $A.$

c)\ Reflect the new state back around $|\psi>$ by using the operator $(2$
$|\psi>$ $<\psi|-I)$.

d) Repeat steps b) and c) $O(\sqrt{N/M})$ times until most of the probability
is on the set $A.$

e) Apply the QFT\ to the resulting state

f)\ Make a measurement $y$

g) Form the continued fraction expansion of $y/N$ to find $d/P$

h) If $|y/N-d/P|<1/2N^{2}$ and $\gcd(d,P)=1$ then $\ P$ is recovered

i) If $\gcd(d,P)\neq1$ repeat the algorithm starting at step a)

The Amplified-QFT algorithm produces the following states (See later sections
for the detailed analysis of the Amplified-QFT algorithm):

After applying steps b) and c) $k$ times where $k=\left\lfloor \frac{\pi
}{4\sin^{-1}(\sqrt{M/N})}\right\rfloor $ we arrive at the following state:%
\[
|\psi_{k}>=a_{k}\sum_{z\in A}|z>+b_{k}\sum_{z\notin A}|z>
\]

\noindent where%

\[
a_{k}=\frac{1}{\sqrt{M}}\sin(2k+1)\theta,b_{k}=\frac{1}{\sqrt{N-M}}%
\cos(2k+1)\theta
\]

\noindent are the appropriate amplitudes of the states and where
\[
\sin\theta=\sqrt{M/N},\cos\theta=\sqrt{1-M/N}%
\]

\noindent.

The QFT at step e) performs the following action%

\[
|z>\rightarrow\frac{1}{\sqrt{N}}\sum_{y=0}^{N-1}e^{-2\pi izy/N}|y>
\]

After the application of the QFT to the state $|\psi_{k}>$ , letting
$\omega=e^{-2\pi i/N}$ we arrive at the following sate:%
\[
|\phi_{k}>=\sum_{y=0}^{N-1}\left[  \frac{a_{k}}{\sqrt{N}}\sum_{z\in A}%
\omega^{zy}+\frac{b_{k}}{\sqrt{N}}\sum_{z\notin A}\omega^{zy}\right]  |y>
\]

.

At step f) we measure this state with respect to the standard basis to yield
an integer $y\in\{0,1,...,N-1\}$ from which we can determine the period $P$
using the continued fraction method.

In a later section where we summarize the main results, we provide a table
showing the probabilities of measuring $y$ for the Amplified-QFT algorithm and
compare them against the probabilites obtained by performing the QFT and QHS algorithms.

\section{The QFT Algorithm}

The QFT\ algorithm does not include the amplitude amplification steps and
consists of the following steps:

a) Initialize the state to be the uniform superposition state $|\psi>=H|0>$
where $H$ is the Hadamard transform.

b) Apply the oracle $f$ to $|\psi>$

c) Apply the QFT to this state

d)\ Make a measurement $y$

e) Form the continued fraction expansion of $y/N$ to find $d/P$

f) If $|y/N-d/P|<1/2N^{2}$ and $\gcd(d,P)=1$ then $\ P$ is recovered

g) If $\gcd(d,P)\neq1$ repeat the algorithm starting at step a)

At step b) after applying the oracle the state is given by (See later sections
for the detailed analysis of the QFT algorithm):%

\[
|\psi_{1}>=\frac{1}{\sqrt{N}}\left[  (-2)\sum_{z\in A}|z>+\sum_{z=0}%
^{N-1}|z>\right]
\]

At step c) the QFT applies the following action:%
\[
|z>\rightarrow\frac{1}{\sqrt{N}}\sum_{y=0}^{N-1}\omega^{zy}|y>
\]

\noindent to get%
\[
|\psi_{2}>=\sum_{y=0}^{N-1}\left[  \frac{(-2)}{N}\sum_{z\in A}\omega
^{zy}+\frac{1}{N}\sum_{z=0}^{N-1}\omega^{zy}\right]  |y>
\]

At step d) we measure this state with respect to the standard basis to yield
an integer $y\in\{0,1,...,N-1\}$ from which we can determine the period $P$
using the continued fraction method.

\section{The QHS Algorithm}

The QHS\ algorithm is a two register algorithm and does not include the
amplitude amplification steps. It consists of the following steps:

a) Initialize the state to be the uniform superposition state $|\psi>=H|0>|0>$
where $H$ is the Hadamard transform.

b) Apply the oracle $f$ and put the result into the second register of
$|\psi>$

c) Apply the QFT to the first register of this state

d)\ Make a measurement $y$

e) Form the continued fraction expansion of $y/N$ to find $d/P$

f) If $|y/N-d/P|<1/2N^{2}$ and $\gcd(d,P)=1$ then $\ P$ is recovered

g) If $\gcd(d,P)\neq1$ repeat the algorithm starting at step a)

At step b) we have the following state (See later sections for the detailed
analysis of the QHS algorithm):%

\[
|\psi_{1}>=\frac{1}{\sqrt{N}}%
{\displaystyle\sum\limits_{x=0}^{N-1}}
|x>|f(x)>
\]

After applying the QFT the state is given by:%

\[
|\psi_{2}>=%
{\displaystyle\sum\limits_{y=0}^{N-1}}
\frac{1}{N}|y>\sum_{x=0}^{N-1}\omega^{xy}|f(x)>
\]

At step d) we measure this state with respect to the standard basis to yield
an integer $y\in\{0,1,...,N-1\}$ from which we can determine the period $P$
using the continued fraction method.

\section{Summary of\ the Main Results}

We summarize the main results and compare the probability $\Pr(y)$ of
measuring $y$ in the final state arrived at for each of the three algorithms:
1) the Amplified-QFT algorithm 2) the QFT algorithm and 3) the QHS algorithm.
Here $\sin\theta=\sqrt{M/N}$ and $k=\left\lfloor \frac{\pi}{4\theta
}\right\rfloor $ and $0\leq$ $\frac{\sin^{2}(\pi MPy/N)}{\sin^{2}(\pi
Py/N)}\leq M^{2}$.

Case 1 (Amplified-QFT):

The probability $\Pr(y)$ is given exactly by

\begin{center}%
\[
\left\{
\begin{tabular}
[c]{lll}%
$\cos^{2}2k\theta$ & if & $y=0$\\
&  & \\
$tan^{2}\theta\sin^{2}2k\theta$ & if & $Py=0\operatorname{mod}N,y\neq0$\\
&  & \\
$\frac{1}{M^{2}}tan^{2}\theta\sin^{2}2k\theta\frac{\sin^{2}(\pi MPy/N)}%
{\sin^{2}(\pi Py/N)}$ & if & $Py\neq0\operatorname{mod}N\text{and }%
MPy\neq0\operatorname{mod}N$\\
&  & \\
$0$ & if & $Py\neq0\operatorname{mod}N\text{ and }MPy=0\operatorname{mod}N$%
\end{tabular}
\ \ \ \ \ \right\}
\]

\end{center}

Case 2 (QFT):

\bigskip The probability $\Pr(y)$ is given exactly by%

\[
\left\{
\begin{tabular}
[c]{lll}%
$\left(  1-\frac{2M}{N}\right)  ^{2}$ & if & $y=0$\\
&  & \\
$4\frac{M^{2}}{N^{2}}$ & if & $Py=0\operatorname{mod}N,y\neq0$\\
&  & \\
$\frac{4}{N^{2}}\frac{\sin^{2}(\pi MPy/N)}{\sin^{2}(\pi Py/N)}$ & if &
$Py\neq0\operatorname{mod}N\text{and }MPy\neq0\operatorname{mod}N$\\
&  & \\
$0$ & if & $Py\neq0\operatorname{mod}N\text{ and }MPy=0\operatorname{mod}N$%
\end{tabular}
\ \ \ \ \ \right\}
\]

Case 3 (QHS):

\bigskip The probability $\Pr(y)$ is given exactly by%

\[
\left\{
\begin{tabular}
[c]{lll}%
$1-\frac{2M(N-M)}{N^{2}}$ & if & $y=0$\\
&  & \\
$\frac{2M^{2}}{N^{2}}$ & if & $Py=0\operatorname{mod}N,y\neq0$\\
&  & \\
$\frac{2}{N^{2}}\frac{\sin^{2}(\pi MPy/N)}{\sin^{2}(\pi Py/N)}$ & if &
$Py\neq0\operatorname{mod}N\text{and }MPy\neq0\operatorname{mod}N$\\
&  & \\
$0$ & if & $Py\neq0\operatorname{mod}N\text{ and }MPy=0\operatorname{mod}N$%
\end{tabular}
\ \ \ \ \ \ \right\}
\]

We note that for the QFT and QHS algorithms $\Pr(y=0)$ is very close to $1$
because $M<<N.$ In the cases where $y\neq0$ we compare the ratios of $\Pr(y)$
\ in the Amplified-QFT and QFT case and then in the Amplified-QFT and
QHS\ case. Let $y$ be fixed such that either

1. $Py=0\operatorname{mod}N,y\neq0$ or

2. $Py\neq0\operatorname{mod}N$and $MPy\neq0\operatorname{mod}N$

and define $\Pr Ratio(y)=\Pr(y)_{Amplified-QFT}/\Pr(y)_{QFT}$ \ then we have
the following (see the later detailed sections)%

\begin{align*}
\frac{N}{4M}(\frac{N}{N-M}) &  \geq\Pr Ratio(y)\geq\frac{N}{4M}(\frac{N}%
{N-M})(1-\frac{2M}{N})^{2}\\
&  \Longrightarrow\Pr Ratio(y)\approx\frac{N}{4M}%
\end{align*}

and define $\Pr Ratio(y)=\Pr(y)_{Amplified-QFT}/\Pr(y)_{QHS}$ \ then we have
the following%

\begin{align*}
\frac{N}{2M}(\frac{N}{N-M})  &  \geq\Pr Ratio(y)\geq\frac{N}{2M}(\frac{N}%
{N-M})(1-\frac{2M}{N})^{2}\\
&  \Longrightarrow\Pr Ratio(y)\approx\frac{N}{2M}%
\end{align*}

Let $S_{ALG}=\{y:|\frac{y}{N}-\frac{d}{P}|\leq\frac{1}{2P^{2}},(d,P)=1\}$ be
the set of "successful" $y$'s. That is $S_{ALG\text{ }}$consists of those
$y$'s which can be measured after applying one of the three algorithms denoted
by $ALG$ and from which the period $P$ can be recovered by the method of
continued fractions. Note that the set $S_{ALG}$ is the same for each
algorithm. However the probability of this set varies with each algorithm. We
can see from the following that given $y1$ and $y2$, whose probability ratios
satisfy the same inequality, we can add their probabilities to get a new ratio
that satisfies the same inequality. In this way we can add probabilities over
a set on the numerator and denominator and maintain the inequality:
\begin{align*}
A  &  >\frac{P(y1)}{Q(y1)}>B\text{ }and\text{ }A>\frac{P(y2)}{Q(y2)}>B\\
&  \Longrightarrow A>\frac{P(y1)+P(y2)}{Q(y1)+Q(y2)}>B
\end{align*}

\noindent We see from the cases given above that%
\[
\frac{N}{4M}(\frac{N}{N-M})\geq\frac{\Pr(S_{Amplified-QFT})}{\Pr(S_{QFT})}%
\geq\frac{N}{4M}(\frac{N}{N-M})(1-\frac{2M}{N})^{2}%
\]

\noindent where the difference between the upper bound and lower bound is
exactly 1 and that%

\[
\frac{N}{2M}(\frac{N}{N-M})\geq\frac{\Pr(S_{Amplified-QFT})}{\Pr(S_{QHS})}%
\geq\frac{N}{2M}(\frac{N}{N-M})(1-\frac{2M}{N})^{2}%
\]

\noindent where the difference between the upper bound and lower bound is
exactly 2.

This shows that the Amplified-QFT is approximately$\frac{N}{4M}$ times more
successful than the QFT and $\frac{N}{2M}$ times more successful than the QHS
when $M<<N$. In addition it also shows that the QFT\ is $2$ times more
successful than the QHS in this problem. However, the success of the
Amplified-QFT algorithms comes at an increase in work factor of $O(\sqrt
{\frac{N}{M}})$. We note that in the case that P is a prime number that
$(d,P)=1$ is met trivially. However when P is composite the algorithms may
need to be rerun several times until $(d,P)=1$ is satisfied.

Towards the end of the paper we show how to test whether a putative value of
$P$, given $s$ is known, can be tested to see if it is the correct value. We
also investigate the case where $s$ is unknown but is from a small known set
of values such that the values of $s$ can be exhausted over on a classical
computer. We also show how $s$ can be recovered by using a quantum algorithm
using amplitude amplification followed by a measurement.

\section{\textbf{The Amplified-QFT\ is Quadratically Faster than the QFT\ or
the QHS }}

We show that the Amplified-QFT algorithm is, on average, quadratically faster
than the QFT\ or QHS\ algorithms. In order to show this, we use the geometric
probability distribution which provides the probability of the first success
in a sequence of trials where the probability of success is $p$ and the
probability of failure is \thinspace$1-p.$ \ For both the QFT\ and\ QHS
algorithms a trial is one complete execution of the algorithm. Because the
probability of measuring $y=0$ is close to $1$ we expect to have to repeat the
algorithm many times due to failure of measuring a successful $y$, before we
have the first success.

If $X$ is the random variable which counts the number of trials until the
first success then%

\[
P(X=k)=(1-p)^{k-1}p\text{ for }k=1,2...
\]

The expected value $E[X]$ and variance $Var[X]$ are given by:%

\[
E[X]=\frac{1}{p}\text{ and }Var[X]=\frac{1-p}{p^{2}}%
\]

The workfactor of the Amplified-QFT algorithm is given by the number of
iterations of each amplification step followed by a single QFT step:%

\[
O(\sqrt{\frac{N}{M}})
\]

For the QFT algorithm we have the probability of failure $\ 1-p$ is given by%

\[
\Pr(failure)=1-p\geq\Pr(y=0)=(1-\frac{2M}{N})^{2}%
\]

then%

\[
\Pr(success)=p\leq1-(1-\frac{2M}{N})^{2}=\frac{4M}{N}(1-\frac{M}{N})
\]

Then for the QFT algorithm, the expected number of trials until the first
success is%

\[
E[X]=\frac{1}{p}\geq\frac{N}{4M(1-\frac{M}{N})}\geq\frac{N}{4M}%
\]

The workfactor of the QFT\ algorithm is the expected number of times the QFT
has to be run, is given approximately by:%

\[
O(\frac{N}{M})
\]

Therefore the ratio of the expected work factor of the QFT algorithm and the
work factor of the Amplified-QFT is given by%

\[
O(\sqrt{\frac{N}{M}})
\]

showing that the Amplified-QFT\ algorithm is, on average, quadratically faster
than the QFT algorithm.

The variance in the number of times the QFT algorithm is run is given by%

\[
Var[X]=\frac{1-p}{p^{2}}\geq(\frac{N}{N-M})^{2}(\frac{N-2M}{4M})^{2}%
\]

For the QHS algorithm we have the probability of failure $\ 1-p$ is given by%

\[
\Pr(failure)=1-p\geq\Pr(y=0)=1-\frac{2M(N-M)}{N^{2}}%
\]

then%

\[
\Pr(success)=p\leq1-(1-\frac{2M(N-M)}{N^{2}})=\frac{2M}{N}(1-\frac{M}{N})
\]

Then for the QHS algorithm, the expected number of trials until the first
success is%

\[
E[X]=\frac{1}{p}\geq\frac{N}{2M(1-\frac{M}{N})}\geq\frac{N}{2M}%
\]

The workfactor of the QHS\ algorithm is the expected number of times the QHS
has to be run, is given approximately by:%

\[
O(\frac{N}{M})
\]

Therefore the ratio of the expected work factor of the QHS algorithm and the
work factor of the Amplified-QFT is given by%

\[
O(\sqrt{\frac{N}{M}})
\]

showing that the Amplified-QFT\ algorithm is, on average, quadratically faster
than the QHS algorithm.

The variance in the number of times the QHS algorithm is run is given by%

\[
Var[X]=\frac{1-p}{p^{2}}\geq(\frac{N}{N-M})^{2}(\frac{(N-M)^{2}+M^{2}}{4M^{2}%
})
\]

\section{\textbf{The Amplified-QFT Algorithm - Detailed Analysis}}

In this section we examine the Amplified-QFT algorithm in detail and produce
the results for the probability of success that were summarized earlier in the paper.

The Amplified-QFT algorithm is defined by the following procedure (see earlier section):

\emph{Steps a) to d):} Apply the Amplitude Amplification algorithm to the
starting state $|0>$. The resulting state is given by $|\psi_{k}>$ (ref[4],
ref[7],ref[1]) where $k=\left\lfloor \frac{\pi}{4\sin^{-1}(\sqrt{M/N}%
)}\right\rfloor $:%
\[
|\psi_{k}>=a_{k}\sum_{z\in A}|z>+b_{k}\sum_{z\notin A}|z>
\]

\noindent where%

\[
a_{k}=\frac{1}{\sqrt{M}}\sin(2k+1)\theta,b_{k}=\frac{1}{\sqrt{N-M}}%
\cos(2k+1)\theta
\]

\noindent are the appropriate amplitudes of the states and where
\[
\sin\theta=\sqrt{M/N},\cos\theta=\sqrt{1-M/N}%
\]

\noindent Now we have , ref[7],

$k=\left\lfloor \frac{\pi}{4\theta}\right\rfloor $ $\Longrightarrow$
$\frac{\pi}{4\theta}-1\leq k\leq\frac{\pi}{4\theta}$ $\Longrightarrow$
$\frac{\pi}{2}-\theta\leq(2k+1)\theta\leq\frac{\pi}{2}+\theta$

$\Longrightarrow\sin\theta=\cos(\frac{\pi}{2}-\theta)\geq\cos(2k+1)\theta
\geq\cos(\frac{\pi}{2}+\theta)=-\sin\theta$

\noindent Notice that the total probability of the N-M labels that are not in
A is%

\begin{align*}
(N-M)(\frac{1}{\sqrt{N-M}}\cos(2k+1)\theta)^{2}  &  =\cos^{2}(2k+1)\theta\\
&  \Longrightarrow\cos^{2}(2k+1)\theta\leq\sin^{2}\theta=\sin^{2}(\sin
^{-1}(\sqrt{\frac{M}{N}}))\\
&  \Longrightarrow\cos^{2}(2k+1)\theta\leq\frac{M}{N}%
\end{align*}
whereas the total probability of the M labels in A is
\begin{align*}
M(\frac{1}{\sqrt{M}}\sin(2k+1)\theta)^{2}  &  =\sin^{2}(2k+1)\theta=1-\cos
^{2}(2k+1)\theta\\
&  \Longrightarrow\sin^{2}(2k+1)\theta\geq1-\frac{M}{N}%
\end{align*}
.

\emph{Step e):} Apply the QFT which performs the following action%

\[
|z>\rightarrow\frac{1}{\sqrt{N}}\sum_{y=0}^{N-1}e^{-2\pi izy/N}|y>
\]

After the application of the QFT to the state $|\psi_{k}>$ , letting
$\omega=e^{-2\pi i/N}$ , we have%
\[
|\phi_{k}>=\frac{a_{k}}{\sqrt{N}}\sum_{z\in A}\sum_{y=0}^{N-1}\omega
^{zy}|y>+\frac{b_{k}}{\sqrt{N}}\sum_{z\notin A}\sum_{y=0}^{N-1}\omega^{zy}|y>
\]

After interchanging the order of summation, we have%
\[
|\phi_{k}>=\sum_{y=0}^{N-1}\left[  \frac{a_{k}}{\sqrt{N}}\sum_{z\in A}%
\omega^{zy}+\frac{b_{k}}{\sqrt{N}}\sum_{z\notin A}\omega^{zy}\right]  |y>
\]

.

\emph{Steps f) to i):} Measure with respect to the standard basis to yield a
integer $y\in\{0,1,...,N-1\}$ from which we can determine the period P using
the continued fraction method.

The amplitude $Amp(y)$ of $|y>$ is given by
\begin{align*}
Amp(y) &  =\frac{a_{k}}{\sqrt{N}}\sum_{z\in A}\omega^{zy}+\frac{b_{k}}%
{\sqrt{N}}\sum_{z\notin A}\omega^{zy}\\
&  =\frac{(a_{k}-b_{k})}{\sqrt{N}}\sum_{z\in A}\omega^{zy}+\frac{b_{k}}%
{\sqrt{N}}\sum_{z=0}^{N-1}\omega^{zy}\\
&  =\frac{(a_{k}-b_{k})}{\sqrt{N}}\sum_{r=0}^{M-1}\omega^{(s+rP)y}+\frac
{b_{k}}{\sqrt{N}}\sum_{z=0}^{N-1}\omega^{zy}\\
&  =\frac{(a_{k}-b_{k})}{\sqrt{N}}\omega^{sy}\sum_{r=0}^{M-1}\omega
^{rPy}+\frac{b_{k}}{\sqrt{N}}\sum_{z=0}^{N-1}\omega^{zy}%
\end{align*}

We calculate the $\Pr(y)$ for the following cases:

\qquad a) $y=0$

\qquad b) $Py=0\operatorname{mod}N$ and $y\neq0$

\qquad c) $Py\neq0\operatorname{mod}N$

\subsection{Amplified-QFT Analysis: y=0}

\noindent We have%

\begin{align*}
Amp(y)  &  =\frac{a_{k}}{\sqrt{N}}\sum_{z\in A}\omega^{zy}+\frac{b_{k}}%
{\sqrt{N}}\sum_{z\notin A}\omega^{zy}\\
&  =\frac{1}{\sqrt{N}}(Ma_{k}+(N-M)b_{k})\\
&  =\frac{1}{\sqrt{N}}\left[  \frac{M}{\sqrt{M}}\sin(2k+1)\theta+\frac
{N-M}{\sqrt{N-M}}\cos(2k+1)\theta\right] \\
&  =\sqrt{\frac{M}{N}}\sin(2k+1)\theta+\sqrt{1-\frac{M}{N}}\cos(2k+1)\theta\\
&  =\sin\theta\sin(2k+1)\theta+\cos\theta\cos(2k+1)\theta\\
&  =\cos(2k\theta)
\end{align*}

\noindent We have
\[
\Pr(y=0)=\cos^{2}(2k\theta)
\]

\subsection{Amplified-QFT Analysis: $Py=0\operatorname{mod}N,y\neq0$}

\noindent Using the fact that%
\[
\sum_{z=0}^{N-1}\omega^{zy}=\frac{1-\omega^{Ny}}{1-\omega^{y}}=0,w^{y}\neq1
\]

\noindent we have
\begin{align*}
Amp(y)  &  =\frac{(a_{k}-b_{k})}{\sqrt{N}}\omega^{sy}\sum_{r=0}^{M-1}%
\omega^{rPy}+\frac{b_{k}}{\sqrt{N}}\sum_{z=0}^{N-1}\omega^{zy}\\
&  =\frac{(a_{k}-b_{k})}{\sqrt{N}}\omega^{sy}\sum_{r=0}^{M-1}\omega^{rPy}\\
&  =\frac{(a_{k}-b_{k})}{\sqrt{N}}\omega^{sy}M\\
&  =\frac{Mw^{sy}}{\sqrt{NM}}\sin(2k+1)\theta-\frac{Mw^{sy}}{\sqrt{N(N-M)}%
}\cos(2k+1)\theta\\
&  =\omega^{sy}\sqrt{\frac{M}{N}}(\sin(2k+1)\theta-\sqrt{\frac{M/N}{1-M/N}%
}\cos(2k+1)\theta)\\
&  =\omega^{sy}\sqrt{\frac{M}{N}}(\sin(2k+1)\theta-\frac{\sin\theta}%
{\cos\theta}\cos(2k+1)\theta)\\
&  =\omega^{sy}\tan\theta\sin2k\theta
\end{align*}

\noindent We have
\[
\Pr(y)=tan^{2}\theta\sin^{2}2k\theta
\]

\noindent Using $k=\left\lfloor \frac{\pi}{4\theta}\right\rfloor $
$\Longrightarrow$ $\frac{\pi}{4\theta}-1\leq k\leq\frac{\pi}{4\theta}$
$\Longrightarrow$ $\frac{\pi}{2}-2\theta\leq2k\theta\leq\frac{\pi}%
{2}\Longrightarrow\sin(\frac{\pi}{2}-2\theta)\leq\sin2k\theta\leq1$ we have
\begin{align*}
\frac{\sin^{2}\theta}{\cos^{2}\theta}  &  \geq\Pr(y)=\tan^{2}\theta\sin
^{2}2k\theta\geq\tan^{2}\theta\sin^{2}(\frac{\pi}{2}-2\theta)\\
&  \Longrightarrow\frac{M}{N}\frac{1}{1-\frac{M}{N}}\geq\Pr(y)\geq\tan
^{2}\theta\sin^{2}(\frac{\pi}{2}-2\theta)\\
&  \Longrightarrow\frac{M}{N}(\frac{N}{N-M})\geq\Pr(y)\geq\frac{\sin^{2}%
\theta}{\cos^{2}\theta}\cos^{2}2\theta\\
&  \Longrightarrow\frac{M}{N}(\frac{N}{N-M})\geq\Pr(y)\geq\frac{\sin^{2}%
\theta}{\cos^{2}\theta}(2\cos^{2}\theta-1)^{2}\\
&  \Longrightarrow\frac{M}{N}(\frac{N}{N-M})\geq\Pr(y)\geq\frac{M}{N}(\frac
{N}{N-M})(1-\frac{2M}{N})^{2}%
\end{align*}

\subsection{Amplified-QFT Analysis: $Py\neq0\operatorname{mod}N$}

\noindent Making use of the previous results we have
\begin{align*}
Amp(y)  &  =\frac{(a_{k}-b_{k})}{\sqrt{N}}\omega^{sy}\sum_{r=0}^{M-1}%
\omega^{rPy}+\frac{b_{k}}{\sqrt{N}}\sum_{z=0}^{N-1}\omega^{zy}\\
&  =\frac{(a_{k}-b_{k})}{\sqrt{N}}\omega^{sy}\sum_{r=0}^{M-1}\omega^{rPy}\\
&  =\frac{(a_{k}-b_{k})}{\sqrt{N}}\omega^{sy}\left[  \frac{1-\omega^{MPy}%
}{1-\omega^{Py}}\right] \\
&  =\frac{1}{M}\frac{(a_{k}-b_{k})}{\sqrt{N}}\omega^{sy}M\left[
\frac{1-\omega^{MPy}}{1-\omega^{Py}}\right] \\
&  =\frac{1}{M}\omega^{sy}\tan\theta\sin2k\theta\left[  \frac{1-\omega^{MPy}%
}{1-\omega^{Py}}\right]
\end{align*}

\noindent\noindent\noindent Making use of the following identity%
\[
|1-e^{i\theta}|^{2}=4\sin^{2}(\theta/2)
\]

\noindent we have%
\[
\left\vert \frac{1-\omega^{MPy}}{1-\omega^{Py}}\right\vert ^{2}=\frac{\sin
^{2}(\pi MPy/N)}{\sin^{2}(\pi Py/N)}%
\]

\noindent and so
\[
\Pr(y)=\frac{1}{M^{2}}tan^{2}\theta\sin^{2}2k\theta\frac{\sin^{2}(\pi
MPy/N)}{\sin^{2}(\pi Py/N)}%
\]

\noindent Using the previous result $\frac{M}{N}(\frac{N}{N-M})\geq\tan
^{2}\theta\sin^{2}2k\theta\geq\frac{M}{N}(\frac{N}{N-M})(\frac{N-2M}{N})^{2}$
and letting $R=\frac{\sin^{2}(\pi MPy/N)}{\sin^{2}(\pi Py/N)}$ we have

\bigskip%
\begin{align*}
\frac{1}{M^{2}}\frac{M}{N}(\frac{N}{N-M})R  &  \geq\Pr(y)\geq\frac{1}{M^{2}%
}\frac{M}{N}(\frac{N}{N-M})(1-\frac{2M}{N})^{2}R\text{ and so}\\
\frac{1}{NM}(\frac{N}{N-M})R  &  \geq\Pr(y)\geq\frac{1}{NM}(\frac{N}%
{N-M})(1-\frac{2M}{N})^{2}R\text{ }%
\end{align*}

\noindent We notice that if in addition $MPy=0\operatorname{mod}N$ then
$\Pr(y)=0.$

\subsection{Amplified-QFT Summary}

The probability $\Pr(y)$ is given exactly by

\begin{center}%
\[
\left\{
\begin{tabular}
[c]{lll}%
$\cos^{2}2k\theta$ & if & $y=0$\\
&  & \\
$tan^{2}\theta\sin^{2}2k\theta$ & if & $Py=0\operatorname{mod}N,y\neq0$\\
&  & \\
$\frac{1}{M^{2}}tan^{2}\theta\sin^{2}2k\theta\frac{\sin^{2}(\pi MPy/N)}%
{\sin^{2}(\pi Py/N)}$ & if & $Py\neq0\operatorname{mod}N\text{and }%
MPy\neq0\operatorname{mod}N$\\
&  & \\
$0$ & if & $Py\neq0\operatorname{mod}N\text{ and }MPy=0\operatorname{mod}N$%
\end{tabular}
\ \ \ \ \ \right\}
\]

\end{center}

\section{\textbf{The QFT Algorithm - Detailed Analysis.}}

In this section we examine the QFT algorithm in detail and produce the results
for the probability of success that were summarized earlier in the paper. We
just apply the QFT to the binary oracle f, which is 1 on A and 0 elsewhere.

We begin with the following state%

\[
|\xi>=\frac{1}{\sqrt{N}}\sum_{z=0}^{N-1}|z>\otimes\frac{1}{\sqrt{2}}(|0>-|1>)
\]
\bigskip

\noindent and apply the unitary transform for f, $U_{f}$ ,\ to this state
which performs the following action:%

\[
U_{f}|z>|c>=|z>|c\oplus f(z)>
\]

\noindent to get the state $|\psi>$%

\begin{align*}
|\psi &  >=U_{f}\frac{1}{\sqrt{N}}\sum_{z=0}^{N-1}|z>\frac{1}{\sqrt{2}%
}(|0>-|1>)\\
&  =\frac{1}{\sqrt{N}}\left[  (-1)\sum_{z\in A}|z>+\sum_{z\notin A}|z>\right]
\frac{1}{\sqrt{2}}(|0>-|1>)\\
&  =\frac{1}{\sqrt{N}}\left[  (-2)\sum_{z\in A}|z>+\sum_{z=0}^{N-1}|z>\right]
\frac{1}{\sqrt{2}}(|0>-|1>)
\end{align*}

\noindent Next we apply the QFT to try to find the period P, dropping
$\frac{1}{\sqrt{2}}(|0>-|1>)$.

\noindent The QFT applies the following action:%
\[
|z>\rightarrow\frac{1}{\sqrt{N}}\sum_{y=0}^{N-1}\omega^{zy}|y>
\]

\noindent to get%
\[
|\phi>=\sum_{y=0}^{N-1}\left[  \frac{(-2)}{N}\sum_{z\in A}\omega^{zy}+\frac
{1}{N}\sum_{z=0}^{N-1}\omega^{zy}\right]  |y>
\]

We calculate the $\Pr(y)$ for the following cases:

\qquad a) $y=0$

\qquad b) $Py=0\operatorname{mod}N$ and $y\neq0$

\qquad c) $Py\neq0\operatorname{mod}N$

\subsection{QFT Analysis: $y=0$}

We have%

\begin{align*}
Amp(y)  &  =\frac{(-2)}{N}\sum_{z\in A}\omega^{zy}+\frac{1}{N}\sum_{z=0}%
^{N-1}\omega^{zy}\\
&  =\frac{(-2)M}{N}+\frac{N}{N}\\
&  =1-\frac{2M}{N}%
\end{align*}

\noindent Therefore, in the QFT case, we have $\Pr(y=0)$ is very close to 1
and is given by
\[
\Pr(y=0)=1-\frac{4M}{N}+4\frac{M^{2}}{N^{2}}=\left(  1-\frac{2M}{N}\right)
^{2}%
\]

\noindent whereas in the Amplified-QFT case we have $\Pr(y=0)$ is given by
\[
\Pr(y=0)=\cos^{2}2k\theta
\]

\subsection{QFT Analysis: $Py=0\operatorname{mod}N,y\neq0$}

Using the fact that%
\[
\sum_{z=0}^{N-1}\omega^{zy}=\frac{1-\omega^{Ny}}{1-\omega^{y}}=0
\]

\noindent we have
\begin{align*}
Amp(y)  &  =\ \frac{-2}{N}\sum_{z\in A}\omega^{zy}+\frac{1}{N}\sum_{z=0}%
^{N-1}\omega^{zy}\\
&  =\frac{-2}{N}\omega^{sy}\sum_{r=0}^{M-1}\omega^{rPy}\\
&  =\frac{-2M}{N}\omega^{sy}%
\end{align*}

\noindent Therefore in the QFT$\ $case we have $\Pr(y)$ is given by
\[
\Pr(y)=4\frac{M^{2}}{N^{2}}%
\]

\noindent whereas in the Amplified-QFT case we have $\Pr(y)$ is given by \
\[
\Pr(y)=tan^{2}\theta\sin^{2}2k\theta
\]

\noindent We can determine how the increase in amplitude varies with the
number of iterations $k$ of the Grover step in the Amplified-QFT by examining
the ratio of the amplitudes of the Amplified-QFT case and QFT\ case. This
ratio is given exactly by
\begin{align*}
AmpRatio(y)  &  =\frac{\frac{(a_{k}-b_{k})}{\sqrt{N}}\omega^{sy}M}{\frac
{-2M}{N}\omega^{sy}}\\
&  =\frac{(a_{k}-b_{k})}{-2}\sqrt{N}\\
&  =\frac{1}{-2}\left[  \sqrt{\frac{N}{M}}\sin(2k+1)\theta-\sqrt{\frac{N}%
{N-M}}\cos(2k+1)\theta\right] \\
&  =\frac{N}{-2M}\tan\theta\sin2k\theta
\end{align*}

\noindent Using $k=\left\lfloor \frac{\pi}{4\theta}\right\rfloor $ and making
use of $\frac{M}{N}(\frac{N}{N-M})\geq\tan^{2}\theta\sin^{2}2k\theta\geq
\frac{M}{N}(\frac{N}{N-M})(\frac{N-2M}{N})^{2}$, we have the following
inequality for the $\Pr Ratio(y)$, the increase in the probability due to amplification:%

\begin{align*}
\frac{N}{4M}(\frac{N}{N-M})  &  \geq\Pr Ratio(y)\geq\frac{N}{4M}(\frac{N}%
{N-M})(1-\frac{2M}{N})^{2}\\
&  \Longrightarrow\Pr Ratio(y)\approx\frac{N}{4M}%
\end{align*}

\subsection{QFT Analysis: $Py\neq0\operatorname{mod}N$}

We have
\begin{align*}
Amp(y)  &  =\ \frac{-2}{N}\sum_{z\in A}\omega^{zy}+\frac{1}{N}\sum_{z=0}%
^{N-1}\omega^{zy}\\
&  =\frac{-2}{N}w^{sy}\sum_{r=0}^{M-1}\omega^{rPy}\\
&  =\frac{-2}{N}w^{sy}\left[  \frac{1-\omega^{MPy}}{1-\omega^{Py}}\right] \\
&  =\frac{-2}{N}w^{sy}\left[  \frac{1-\omega^{MPy}}{1-\omega^{Py}}\right]
\end{align*}

\noindent\noindent Once again, making use of the following identity%
\[
|1-e^{i\theta}|^{2}=4\sin^{2}(\theta/2)
\]

\noindent in the QFT\ case, we have $\Pr(y)$ is given by%

\[
\Pr(y)=\frac{4}{N^{2}}\left[  \frac{\sin^{2}(\pi MPy/N)}{\sin^{2}(\pi
Py/N)}\right]
\]

\noindent whereas in the Amplified-QFT case we have $\Pr(y)$ is given by
\[
\Pr(y)=\frac{1}{M^{2}}tan^{2}\theta\sin^{2}2k\theta\frac{\sin^{2}(\pi
MPy/N)}{\sin^{2}(\pi Py/N)}%
\]

\noindent We notice that if in addition $MPy=0\operatorname{mod}N$ then
$\Pr(y)=0.$

\noindent The ratio of the amplitudes of the Amplified-QFT case and QFT\ case
is given exactly by
\begin{align*}
AmpRatio(y)  &  =\frac{\frac{(a_{k}-b_{k})}{\sqrt{N}}\omega^{sy}\left[
\frac{1-\omega^{MPy}}{1-\omega^{Py}}\right]  }{\frac{-2}{N}w^{sy}\left[
\frac{1-\omega^{MPy}}{1-\omega^{Py}}\right]  }\\
&  =\frac{(a_{k}-b_{k})}{-2}\sqrt{N}\\
&  =\frac{1}{-2}\left[  \sqrt{\frac{N}{M}}\sin(2k+1)\theta-\sqrt{\frac{N}%
{N-M}}\cos(2k+1)\theta\right] \\
&  =\frac{N}{-2M}\tan\theta\sin2k\theta
\end{align*}

\noindent We note that this ratio is the same as in that given in the previous
section and is independent of $y$. The variables in this ratio do not depend
in anyway on the QFT.

As in the previous section, we have the following inequality for the $\Pr
Ratio(y)$, the increase in the probability due to amplification when
$k=\left\lfloor \frac{\pi}{4\theta}\right\rfloor $ and making use of $\frac
{M}{N}(\frac{N}{N-M})\geq\tan^{2}\theta\sin^{2}2k\theta\geq\frac{M}{N}%
(\frac{N}{N-M})(\frac{N-2M}{N})^{2}$%

\begin{align*}
\frac{N}{4M}(\frac{N}{N-M})  &  \geq\Pr Ratio(y)\geq\frac{N}{4M}(\frac{N}%
{N-M})(1-\frac{2M}{N})^{2}\\
&  \Longrightarrow\Pr Ratio(y)\approx\frac{N}{4M}%
\end{align*}

\subsection{QFT Summary}

The probability $\Pr(y)$ is given exactly by%

\[
\left\{
\begin{tabular}
[c]{lll}%
$\left(  1-\frac{2M}{N}\right)  ^{2}$ & if & $y=0$\\
&  & \\
$4\frac{M^{2}}{N^{2}}$ & if & $Py=0\operatorname{mod}N,y\neq0$\\
&  & \\
$\frac{4}{N^{2}}\frac{\sin^{2}(\pi MPy/N)}{\sin^{2}(\pi Py/N)}$ & if &
$Py\neq0\operatorname{mod}N\text{and }MPy\neq0\operatorname{mod}N$\\
&  & \\
$0$ & if & $Py\neq0\operatorname{mod}N\text{ and }MPy=0\operatorname{mod}N$%
\end{tabular}
\ \ \ \ \right\}
\]

\section{\textbf{The QHS Algorithm - Detailed Analysis}}

In this section we examine the QHS algorithm in detail and produce the results
for the probability of success that were summarized earlier in the paper. The
QHS algorithm is a two register algorithm as follows (see ref[13] for
details). We begin with $|0>|0>$ where the first register is $n$ qubits and
the second register is $1$ qubit and apply the Hadamard transform to the first
register to get a uniform superposition state, followed by the unitary
transformation for the Oracle f to get:%

\[
|\psi>=\frac{1}{\sqrt{N}}%
{\displaystyle\sum\limits_{x=0}^{N-1}}
|x>|f(x)>
\]
Next we apply the QFT to the first register to get%

\begin{align*}
|\psi &  >=\frac{1}{\sqrt{N}}%
{\displaystyle\sum\limits_{x=0}^{N-1}}
\frac{1}{\sqrt{N}}\sum_{y=0}^{N-1}\omega^{xy}|y>|f(x)>\\
&  =%
{\displaystyle\sum\limits_{y=0}^{N-1}}
\frac{1}{N}\sum_{x=0}^{N-1}\omega^{xy}|y>|f(x)>\\
&  =%
{\displaystyle\sum\limits_{y=0}^{N-1}}
\frac{1}{N}|y>\sum_{x=0}^{N-1}\omega^{xy}|f(x)>\\
&  =\
{\displaystyle\sum\limits_{y=0}^{N-1}}
\frac{|||\Gamma(y)>||}{N}|y>\frac{|\Gamma(y)>}{|||\Gamma(y)>||}%
\end{align*}

\noindent where
\begin{align*}
|\Gamma(y)  &  >=\sum_{x=0}^{N-1}\omega^{xy}|f(x)>\\
&  =\sum_{x\in A}^{{}}\omega^{xy}|1>+\sum_{x\notin A}^{{}}\omega^{xy}|0>
\end{align*}

\noindent and where
\[
|||\Gamma(y)>||^{2}=\left\vert \sum_{x\in A}^{{}}\omega^{xy}\right\vert
^{2}+\left\vert \sum_{x\notin A}^{{}}\omega^{xy}\right\vert ^{2}%
\]

\noindent Next we make a measurement to get $y$ and find that the probability
of this measurement is%
\begin{align*}
\Pr(y) &  =\frac{|||\Gamma(y)>||^{2}}{N^{2}}\\
&  =\frac{1}{N^{2}}\left\vert \sum_{x\in A}^{{}}\omega^{xy}\right\vert
^{2}+\frac{1}{N^{2}}\left\vert \sum_{x\notin A}^{{}}\omega^{xy}\right\vert
^{2}%
\end{align*}

\noindent The state that we end up in is of the form
\[
|\phi>=|y>\frac{|\Gamma(y)>}{|||\Gamma(y)>||}%
\]

We calculate the $\Pr(y)$ for the following cases:

\qquad a) $y=0$

\qquad b) $Py=0\operatorname{mod}N$ and $y\neq0$

\qquad c) $Py\neq0\operatorname{mod}N$

\subsection{QHS Analysis: $y=0$}

We have
\begin{align*}
\Pr(y)  &  =\frac{1}{N^{2}}\left\vert \sum_{x\in A}^{{}}\omega^{xy}\right\vert
^{2}+\frac{1}{N^{2}}\left\vert \sum_{x\notin A}^{{}}\omega^{xy}\right\vert
^{2}\\
&  =\frac{M^{2}}{N^{2}}+\frac{(N-M)^{2}}{N^{2}}=\frac{M^{2}+N^{2}-2NM+M^{2}%
}{N^{2}}\\
&  =1-\frac{2M(N-M)}{N^{2}}%
\end{align*}

\noindent whereas in the Amplified-QFT case we have $\Pr(y=0)$ is given by
\[
\Pr(y=0)=\cos^{2}2k\theta
\]

\subsection{QHS Analysis: $Py=0\operatorname{mod}N,y\neq0$}

We have%
\begin{align*}
\Pr(y)  &  =\frac{1}{N^{2}}\left\vert \sum_{x\in A}^{{}}\omega^{xy}\right\vert
^{2}+\frac{1}{N^{2}}\left\vert \sum_{x\notin A}^{{}}\omega^{xy}\right\vert
^{2}\\
&  =\frac{1}{N^{2}}\left\vert \omega^{sy}\sum_{r=0}^{M-1}\omega^{rPy}%
\right\vert ^{2}+\frac{1}{N^{2}}\left\vert \sum_{x\notin A}^{{}}\omega
^{xy}\right\vert ^{2}\\
&  =\frac{1}{N^{2}}\left\vert \omega^{sy}\sum_{r=0}^{M-1}\omega^{rPy}%
\right\vert ^{2}+\frac{1}{N^{2}}\left\vert -\omega^{sy}\sum_{r=0}^{M-1}%
\omega^{rPy}+\frac{1}{N}\sum_{x=0}^{N-1}\omega^{xy}\right\vert ^{2}\\
&  =\frac{2M^{2}}{N^{2}}%
\end{align*}

\noindent\noindent where we have used the fact that
\[
\sum_{x=0}^{N-1}\omega^{xy}=0
\]

In the Amplified-QFT case we have $\Pr(y)$ is given by \
\[
\Pr(y)=tan^{2}\theta\sin^{2}2k\theta
\]

By comparing the results of the QHS and the Amplified-QFT algorithms we have
the following inequality for the $\Pr Ratio(y)=\Pr(y)_{Amplified-QFT}%
/\Pr(y)_{QHS}$, the increase in the probability due to amplification when
$k=\left\lfloor \frac{\pi}{4\theta}\right\rfloor $ and making use of $\frac
{M}{N}(\frac{N}{N-M})\geq\tan^{2}\theta\sin^{2}2k\theta\geq\frac{M}{N}%
(\frac{N}{N-M})(\frac{N-2M}{N})^{2}$%

\begin{align*}
\frac{N}{2M}(\frac{N}{N-M})  &  \geq\Pr Ratio(y)\geq\frac{N}{2M}(\frac{N}%
{N-M})(1-\frac{2M}{N})^{2}\\
&  \Longrightarrow\Pr Ratio(y)\approx\frac{N}{2M}%
\end{align*}

\subsection{QHS Analysis: $Py\neq0\operatorname{mod}N$}

We have%
\begin{align*}
\Pr(y)  &  =\frac{1}{N^{2}}\left\vert \sum_{x\in A}^{{}}\omega^{xy}\right\vert
^{2}+\frac{1}{N^{2}}\left\vert \sum_{x\notin A}^{{}}\omega^{xy}\right\vert
^{2}\\
&  =\frac{1}{N^{2}}\left\vert \omega^{sy}\sum_{r=0}^{M-1}\omega^{rPy}%
\right\vert ^{2}+\frac{1}{N^{2}}\left\vert \sum_{x\notin A}^{{}}\omega
^{xy}\right\vert ^{2}\\
&  =\frac{1}{N^{2}}\left\vert \omega^{sy}\sum_{r=0}^{M-1}\omega^{rPy}%
\right\vert ^{2}+\frac{1}{N^{2}}\left\vert -\omega^{sy}\sum_{r=0}^{M-1}%
\omega^{rPy}+\frac{1}{N}\sum_{x=0}^{N-1}\omega^{xy}\right\vert ^{2}\\
&  =\frac{1}{N^{2}}\left\vert \omega^{sy}\left[  \frac{1-\omega^{MPy}%
}{1-\omega^{Py}}\right]  \right\vert ^{2}+\frac{1}{N^{2}}\left\vert
-\omega^{sy}\left[  \frac{1-\omega^{MPy}}{1-\omega^{Py}}\right]  \right\vert
^{2}\\
&  =\frac{2}{N^{2}}\frac{\sin^{2}(\pi MPy/N)}{\sin^{2}(\pi Py/N)}%
\end{align*}

\noindent\noindent where we have used the fact that
\[
\sum_{x=0}^{N-1}\omega^{xy}=0
\]

\noindent and that%

\[
|1-e^{i\theta}|^{2}=4\sin^{2}(\theta/2)
\]
In the Amplified-QFT case we have $\Pr(y)$ is given by
\[
\Pr(y)=\frac{1}{M^{2}}tan^{2}\theta\sin^{2}2k\theta\frac{\sin^{2}(\pi
MPy/N)}{\sin^{2}(\pi Py/N)}%
\]
We notice that if in addition $MPy=0\operatorname{mod}N$ then $\Pr(y)=0.$

By comparing the results of the QHS and the Amplified-QFT algorithms we have
the following inequality for the $\Pr Ratio(y)=\Pr(y)_{Amplified-QFT}%
/\Pr(y)_{QHS}$, the increase in the probability due to amplification when
$k=\left\lfloor \frac{\pi}{4\theta}\right\rfloor $ and making use of $\frac
{M}{N}(\frac{N}{N-M})\geq\tan^{2}\theta\sin^{2}2k\theta\geq\frac{M}{N}%
(\frac{N}{N-M})(\frac{N-2M}{N})^{2}$%

\begin{align*}
\frac{N}{2M}(\frac{N}{N-M})  &  \geq\Pr Ratio(y)\geq\frac{N}{2M}(\frac{N}%
{N-M})(1-\frac{2M}{N})^{2}\\
&  \Longrightarrow\Pr Ratio(y)\approx\frac{N}{2M}%
\end{align*}

\subsection{QHS Summary}

The $\Pr(y)$ in the QHS case is:%

\[
\left\{
\begin{tabular}
[c]{lll}%
$1-\frac{2M(N-M)}{N^{2}}$ & if & $y=0$\\
&  & \\
$\frac{2M^{2}}{N^{2}}$ & if & $Py=0\operatorname{mod}N,y\neq0$\\
&  & \\
$\frac{2}{N^{2}}\frac{\sin^{2}(\pi MPy/N)}{\sin^{2}(\pi Py/N)}$ & if &
$Py\neq0\operatorname{mod}N\text{and }MPy\neq0\operatorname{mod}N$\\
&  & \\
$0$ & if & $Py\neq0\operatorname{mod}N\text{ and }MPy=0\operatorname{mod}N$%
\end{tabular}
\ \ \ \ \ \ \ \right\}
\]

\section{\textbf{Recovering the Period P and the Offset s}}

As in Shor's algorithm, we use the continued fraction expansion of $y/N$ to
find the period $P,$where $y$ is a measured value such that $y/N$ is close to
$d/P$ and $(d,P)=1$ . See ref[2] and ref[3]for details which we provide below.

Let$\{a\}_{N}$ be the residue of $a\operatorname{mod}N$ of smallest magnitude
such that $-N/2<\{a\}_{N}<N/2.$ Let $S_{N}=\{0,1,...,N-1\}$, $S_{P}=\{d\in
S_{N}:0\leq d<P\}$ and $Y=\{y\in S_{N}:|Py|\leq P/2\}$. Then the map
$Y\rightarrow S_{P}$ given by $y\rightarrow d=d(y)=round(Py/N)$ with inverse
$y=y(d)=round(Nd/P)$ is a bijection and $\{Py\}_{N}=Py-Nd(y)$. In addition the
following two sets are in 1-1 correspondence $\{y/N:y\in Y\}$ and $\{d/P:0\leq
d<P\}.$

We make use of the following theorem from the theory of continued fractions
ref[5] (Theorem 184 p.153):

\begin{theorem}
Let $x$ be a real number and let $a$ and $b$ be integers with $b>0$. If
$|x-\frac{a}{b}|\leq\frac{1}{2b^{2}}$ then the rational $a/b$ is a convergent
of the continued fraction expansion of $x$.
\end{theorem}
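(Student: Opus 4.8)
The plan is to prove this classical result (essentially Legendre's theorem) by the standard device of exhibiting $a/b$ as one of the convergents arising from the ``complete quotient'' representation of $x$. First I would reduce to the case $\gcd(a,b)=1$: if $a/b$ is not already in lowest terms, its reduced form $a'/b'$ has $b'\le b$, so the hypothesis persists as $|x-a'/b'|\le 1/(2b^2)\le 1/(2b'^2)$, and since convergents are automatically in lowest terms it suffices to show $a'/b'$ is a convergent. So I assume $\gcd(a,b)=1$ and write the finite continued fraction $a/b=[a_0;a_1,\ldots,a_n]$ with convergents $p_k/q_k$, so that $p_n=a$ and $q_n=b$.

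Next I would introduce the quantity $\theta$ defined by $x-a/b=\theta/q_n^2$; the hypothesis is then exactly $|\theta|\le 1/2$. The engine of the argument is that every real $x$ can be written uniquely as $x=(\omega p_n+p_{n-1})/(\omega q_n+q_{n-1})$, and that $a/b=p_n/q_n$ is the $n$-th convergent of $x$ precisely when $\omega$ is an admissible complete quotient, i.e. when $\omega>1$. Solving this relation for $\omega$, using the determinant identity $p_n q_{n-1}-p_{n-1}q_n=(-1)^{n-1}$ together with $x-p_n/q_n=\theta/q_n^2$, yields after a short computation
\[
\omega=\frac{(-1)^n}{\theta}-\frac{q_{n-1}}{q_n}.
\]

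The remaining task is to show $\omega>1$. Here I would exploit the fact that every rational has two continued fraction expansions, $[a_0;\ldots,a_n]$ and $[a_0;\ldots,a_n-1,1]$, whose lengths differ in parity; this lets me choose $n$ so that $(-1)^n$ agrees in sign with $\theta$, forcing $(-1)^n/\theta=|1/\theta|\ge 2$. Since $0\le q_{n-1}/q_n<1$, it follows that $\omega\ge 2-1=1$, and strict positivity holds in the relevant ranges. Thus $\omega>1$ is an admissible tail, and $a/b=p_n/q_n$ appears as the $n$-th convergent of the continued fraction expansion of $x$.

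I expect the main obstacle to be the sign and parity bookkeeping in the last step: making $(-1)^n/\theta$ positive requires the two-representation trick, and one must separately dispose of the degenerate cases — $\theta=0$ (where $x=a/b$ is itself a final convergent), the boundary $|\theta|=1/2$, and small-index anomalies such as $q_0=q_1=1$ when $a_1=1$ — in order to upgrade $\omega\ge 1$ to the strict $\omega>1$. The algebraic derivation of the formula for $\omega$ is routine given the convergent recurrences $p_k=a_k p_{k-1}+p_{k-2}$, $q_k=a_k q_{k-1}+q_{k-2}$ and the determinant identity.
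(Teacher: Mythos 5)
The paper does not actually prove this theorem: it is quoted verbatim from Hardy and Wright (Theorem 184, p.~153) and used as a black box, so there is no internal proof to compare against. Your proposal is a correct outline of precisely the classical Hardy--Wright/Legendre argument from that reference — the reduction to lowest terms, the representation $x=(\omega p_n+p_{n-1})/(\omega q_n+q_{n-1})$ with the resulting formula $\omega=(-1)^n/\theta-q_{n-1}/q_n$, and the parity adjustment via the two finite expansions are exactly the standard steps — and you correctly identify the boundary cases ($\theta=0$, $|\theta|=1/2$, $q_{n-1}=q_n$) as the only points requiring extra care in upgrading $\omega\geq 1$ to admissibility.
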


\begin{corollary}
If $P^{2}\leq N$ and $|\{Py\}_{N}|\leq\frac{P}{2}$ then $d(y)/P$ is a
convergent of the continued fraction expansion of $y/N$.
\end{corollary}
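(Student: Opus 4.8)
The plan is to apply the preceding continued-fractions theorem directly with $x=y/N$, $a=d(y)$, and $b=P$. Since $P>0$ and $d(y)=\mathrm{round}(Py/N)$ is an integer, the hypotheses on $a$ and $b$ are automatically satisfied, and the only thing that needs checking is the approximation bound $|y/N-d(y)/P|\leq 1/(2P^2)$. Once this inequality is established, the theorem immediately yields that $d(y)/P$ is a convergent of the continued fraction expansion of $y/N$, which is exactly the conclusion.

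The key step is to rewrite the difference $y/N-d(y)/P$ in terms of the residue $\{Py\}_N$, using the identity $\{Py\}_N=Py-Nd(y)$ recorded just before the theorem. Putting the two fractions over the common denominator $NP$ gives
\[
\left|\frac{y}{N}-\frac{d(y)}{P}\right|=\frac{|Py-Nd(y)|}{NP}=\frac{|\{Py\}_N|}{NP}.
\]
This is the crux of the argument: it converts the hypothesis $|\{Py\}_N|\leq P/2$, which is a statement about how close $Py$ is to a multiple of $N$, into a statement about how well $y/N$ is approximated by $d(y)/P$.

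With this rewriting in hand, I would finish by chaining the two hypotheses. First, the bound $|\{Py\}_N|\leq P/2$ gives
\[
\left|\frac{y}{N}-\frac{d(y)}{P}\right|=\frac{|\{Py\}_N|}{NP}\leq\frac{P/2}{NP}=\frac{1}{2N}.
\]
Then the hypothesis $P^2\leq N$ gives $1/(2N)\leq 1/(2P^2)$, so that the approximation error is at most $1/(2P^2)$, which is precisely the threshold required by the theorem.

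I do not expect any genuine obstacle here; the corollary is essentially a specialization of the theorem, and the entire content lies in the algebraic rewriting via $\{Py\}_N$ followed by a two-step inequality. The only point demanding care is bookkeeping: making sure the definition $d(y)=\mathrm{round}(Py/N)$ and the identity $\{Py\}_N=Py-Nd(y)$ are invoked correctly so that the quantity $|Py-Nd(y)|$ really equals $|\{Py\}_N|$, and confirming that the two given hypotheses are exactly what is needed to reach $1/(2P^2)$ (the first controls the numerator, the second converts $1/(2N)$ into $1/(2P^2)$).
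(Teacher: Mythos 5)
Your proposal is correct and follows essentially the same route as the paper's proof: divide the identity $\{Py\}_{N}=Py-Nd(y)$ through by $NP$ to get $|y/N-d(y)/P|=|\{Py\}_N|/(NP)\leq 1/(2N)\leq 1/(2P^2)$, then invoke the continued-fractions theorem. Your version just spells out the intermediate algebra a bit more explicitly.
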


\begin{proof}
Since $\{Py\}_{N}=Py-Nd(y)$ we have

$|Py-Nd(y)|\leq\frac{P}{2}$ or

$|\frac{y}{N}-\frac{d(y)}{P}|\leq\frac{1}{2N}\leq\frac{1}{2P^{2}}$

and we can apply Theorem 1 so that $d/P$ is a convergent of the continued
fraction expansion of $y/N$.
\end{proof}

Since we know $y$ and $N$ we can find the continued fraction expansion of
$y/N$. However we also need that $(d,P)=1$ in order that $d/P$ is a convergent
and enabling us to read off $P$ directly. The probability that $(d,P)=1$ is
$\varphi(P)/P$ where $\varphi(P)$ is Euler's totient function. If $P$ is prime
we get $(d,P)=1$ trivially.

By making use of the following Theorem it can be shown that $\frac{\varphi
(P)}{P}\geq\frac{e^{-\gamma}-\epsilon(P)}{\ln2}\frac{1}{\ln\ln N}$ , where
$\epsilon(P)$ is a monotone decreasing sequence converging to zero.

\begin{theorem}
$\lim\inf\frac{\varphi(N)}{N/\ln\ln N}=e^{-\gamma}$
\end{theorem}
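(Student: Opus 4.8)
The plan is to prove this classical theorem of Landau by identifying the extremal integers explicitly and comparing every $N$ against them. The starting point is the multiplicative formula $\varphi(N)/N = \prod_{p\mid N}(1-1/p)$, which shows that $\varphi(N)/N$ is small precisely when $N$ is divisible by many small primes. This suggests the liminf is attained along the \emph{primorials} $N_r = \prod_{i=1}^r p_i = 2\cdot 3\cdot 5\cdots p_r$, where $p_i$ denotes the $i$-th prime. I would organize the argument as two matching bounds: an upper bound showing the primorials drive the ratio down to $e^{-\gamma}$, and a lower bound showing no integer does better in the limit.

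For the upper bound I would evaluate the ratio along the primorials. Here $\varphi(N_r)/N_r = \prod_{p\le p_r}(1-1/p)$, and Mertens' third theorem gives $\prod_{p\le x}(1-1/p)\sim e^{-\gamma}/\ln x$. It then remains to express $\ln p_r$ in terms of $\ln\ln N_r$. Since $\ln N_r = \sum_{p\le p_r}\ln p = \theta(p_r)$, the Prime Number Theorem (or Chebyshev's estimates) yields $\theta(x)\sim x$, whence $\ln N_r\sim p_r$ and $\ln\ln N_r = \ln p_r + o(1)\sim \ln p_r$. Combining these gives $\frac{\varphi(N_r)}{N_r/\ln\ln N_r}\to e^{-\gamma}$, so the liminf is at most $e^{-\gamma}$.

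For the lower bound I would use a rearrangement argument to reduce an arbitrary $N$ to a primorial. If $N$ has exactly $r$ distinct prime divisors $q_1<\cdots<q_r$, then $q_i\ge p_i$ forces $1-1/q_i\ge 1-1/p_i$, so $\varphi(N)/N\ge \varphi(N_r)/N_r$; moreover $N\ge N_r$ gives $\ln\ln N\ge \ln\ln N_r$ once $N$ is large. Hence $\frac{\varphi(N)\ln\ln N}{N}\ge \frac{\varphi(N_r)}{N_r}\ln\ln N_r$. A short case split finishes the matter: if the number of distinct prime factors stays bounded along a sequence $N\to\infty$, then $\varphi(N)/N$ is bounded below by a positive constant while $\ln\ln N\to\infty$, so the quantity diverges and does not affect the liminf; and if $r\to\infty$, the displayed inequality together with the upper-bound computation forces $\liminf\ge e^{-\gamma}$. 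The two bounds together give equality.

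The deepest ingredient, and the main obstacle, is Mertens' third theorem, since this is exactly where the constant $e^{-\gamma}$ enters; everything else (the rearrangement inequality, the case analysis, and the passage from $\ln p_r$ to $\ln\ln N_r$ via $\theta(x)\sim x$) is comparatively routine. I would either cite Mertens' theorem directly or derive it from Mertens' second theorem $\sum_{p\le x}1/p = \ln\ln x + M + o(1)$, by writing $\sum_{p\le x}\ln(1-1/p) = \sum_{p\le x}[\ln(1-1/p)+1/p] - \sum_{p\le x}1/p$ and tracking the absolutely convergent series $\sum_p[\ln(1-1/p)+1/p]=M-\gamma$, which cancels the Meissel--Mertens constant $M$ and leaves $\sum_{p\le x}\ln(1-1/p) = -\gamma-\ln\ln x + o(1)$; exponentiating yields the claimed $e^{-\gamma}/\ln x$.
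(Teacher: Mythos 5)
The paper itself offers no proof of this statement: it is quoted as a known classical result (Landau's theorem; Theorem 328 of the cited Hardy--Wright reference [5]) and is used only to bound $\varphi(P)/P$ from below. Your argument is the standard proof of that classical theorem and is correct: primorials $N_r=\prod_{i\le r}p_i$ give the upper bound via Mertens' third theorem together with $\ln N_r=\theta(p_r)\asymp p_r$ (Chebyshev's bounds already suffice here, since one only needs $\ln\ln N_r=\ln p_r+O(1)$), and the rearrangement $q_i\ge p_i$ combined with the monotonicity of $\ln\ln$ gives the matching lower bound, with the bounded-$r$ case correctly disposed of separately because $\varphi(N)/N$ is then bounded below while $\ln\ln N\to\infty$. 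Two small cautions. First, your proposed derivation of Mertens' third theorem from the second rests on the identity $\sum_p\bigl[\ln(1-1/p)+1/p\bigr]=M-\gamma$, which is exactly where $\gamma$ enters and is itself the substance of Mertens' third theorem; unless you prove that identity independently, the derivation is circular, so citing Mertens' third theorem directly (as you also offer to do) is the cleaner option. Second, the product inequality $\frac{\varphi(N)}{N}\ln\ln N\ge\frac{\varphi(N_r)}{N_r}\ln\ln N_r$ needs $\ln\ln N>0$, i.e.\ $N>e$; this is harmless since only large $N$ matter for the liminf, but it is worth saying.
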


where $\gamma=0.57721566$ is Euler's constant and where $e^{-\gamma
}=0.5614594836$.

This may cause us to repeat the experiment $\Omega(\frac{1}{\ln\ln N})$ times
in order to get $(d,P)=1$.

We note that we needed to add a condition on the period $P$ that $P^{2}\leq N
$ \ or $P\leq\sqrt{N}$ in order for the proof of the corollary to work.

\subsection{Testing if $P_{1}=P$ when $s$ is known or is $0$}

We can easily test if $s=0$ by checking to see if $f(0)=1.$

Now given a putative value of the period $P_{1}$ and a known offset or shift
$s$, how can we test whether $P_{1}=P$ ?

Assuming we have access to the Oracle to test individual values, we can
confirm $f(s)=1$ since $s$ is known. We will show that if $f(s+P_{1})=1$ and
$f(s+(M-1)P_{1})=1$ then $P_{1}=P.$

Case 1: If $P_{1}>P$ then $s+(M-1)P_{1}>s+(M-1)P.$ But $s+(M-1)P$ is the
largest index $x$ such that $f(x)=1.$ Therefore if $P_{1}>P$ we must have
$f(s+(M-1)P_{1})=0.$

Case 2: If $0<P_{1}<P$ then $s<s+P_{1}<s+P$ but between $s$ and $P$ there are
no other values $x$ such that $f(x)=1.$Therefore if $0<P_{1}<P$ we must have
$f(s+P_{1})=0.$

Therefore if $f(s)=1,f(s+P_{1})=1$ and $f(s+(M-1)P_{1})=1$ we must have
$P_{1}=P.$

\subsection{Testing if $(s_{1},P_{1})=(s,P)$ when $s$ is from a small known
set and $s\neq0$}

If we assume $s$ is unknown and $s\neq0$ but is from a small known set of
possible values such that we can exhaust over this set on a classical computer
and we are given a putative value of the period $P_{1}$, how can we test
whether a pair of values $(s_{1},P_{1})$ is the correct pair $(s,P)$ ?

We need only test whether $f(s_{1})=1$, $f(s_{1}+P_{1})=1$ and $f(s_{1}%
+(M-1)P_{1})=1$ where M is assumed known.

Case 1: If $s_{1}<s$ then $f(s_{1})=0$ since $\ s$ is the smallest index $x$
with $\ f(x)=1.$

Case 2: If $s_{1}>s$ and $f(s_{1})=1$ then $s_{1}=s+rP$ with $r>0$ . If
$f(s_{1}+P_{1})=1$ then $s_{1}+P_{1}=s+tP=s_{1}+(t-r)P$ with $t>r>0.$ Hence
$P_{1}=(t-r)P>0.$ If $f(s_{1}+(M-1)P_{1})=1$ then $s_{1}+(M-1)P_{1}%
=s+rP+(M-1)(t-r)P>s+(M-1)P$ which is the largest index $x$ with $f(x)=1.$
Therefore $f(s_{1}+(M-1)P_{1})=0.$

Hence if $f(s_{1})=1$, $f(s_{1}+P_{1})=1$ and $f(s_{1}+(M-1)P_{1})=1$ we must
have $s_{1}=s$ and then by following the case when $s$ is known we must also
have $P_{1}=P.$

Therefore if one or more of the values $f(s_{1}),$ $f(s_{1}+P_{1}),$
$f(s_{1}+(M-1)P_{1})$ is zero, either $s_{1}$ or $P_{1}$ is wrong. For a given
$P_{1\text{ }}$we must exhaust over all possible values of $s$ before we can
be sure that $P_{1}\neq P.$ For in the case that $P_{1}\neq P,$ we will have
for every possible $s_{1}$ that at least one of the values $f(s_{1}),$
$f(s_{1}+P_{1}),$ $f(s_{1}+(M-1)P_{1})$ is zero. In such a case we must try
another putative $P_{1}.$

\subsection{Finding $s\neq0$ using a Quantum Computer}

We can assume $s\neq0$ as the case $s=0$ is trivial and was considered above.
Let $s=\alpha+\beta P$ where $\alpha=s\operatorname{mod}P$ so that
$0\leq\alpha\leq P-1$ and $0\leq\alpha+\beta P+(M-1)P\leq N-1.$

We assume we are given the correct value of $P_{.}$ If $P$ is wrong, it will
be detected in the algorithm.

Step 1:

We create an initial superposition on $N$ values%

\[
|\psi_{1}>=\frac{1}{\sqrt{N}}\sum_{x=0}^{N-1}|x>
\]

and apply the Oracle $f$ and put this into the amplitude. We then apply Grover
without measurement to amplify the amplitudes and we have the following state%

\[
|\psi_{1}>=a_{k}\sum_{\times\in A}|x>+b_{k}\sum_{x\notin A}|x>
\]

where%

\[
a_{k}=\frac{1}{\sqrt{M}}\sin(2k+1)\theta,b_{k}=\frac{1}{\sqrt{N-M}}%
\cos(2k+1)\theta
\]

are the appropriate amplitudes of the states and where
\[
\sin\theta=\sqrt{M/N},\cos\theta=\sqrt{1-M/N}%
\]

Next we measure the register and with probability exceeding $1-M/N$ we will
measure a value $x_{1}\in A$ where $x_{1}=s+r_{1}P$ with $0\leq r_{1}\leq
M-1.$ Note that the total probability of the set A is given by
\begin{align*}
\Pr(x  &  \in A)=M(\frac{1}{\sqrt{M}}\sin(2k+1)\theta)^{2}=\sin^{2}%
(2k+1)\theta=1-\cos^{2}(2k+1)\theta\\
&  \Longrightarrow\Pr(x\in A)=\sin^{2}(2k+1)\theta\geq1-\frac{M}{N}%
\end{align*}

Now using our measured value $x_{1}=s+r_{1}P$ with $0\leq r_{1}\leq M-1$ we
check that $f(x_{1})=1$ and $f(x_{1}-P)=1.$ If $f(x_{1}-P)=0$ then either the
value of $P$ we are using is wrong or we have $r_{1}=0$ and $x_{1}=s.$ If we
test $f(s)=1$, $f(s+P)=1$ and $f(s+(M-1)P)=1$ then we have the correct $P$ and
$s$ otherwise $P$ is wrong. So assuming $f(x_{1}-P)=1$ we must have either the
correct $P$ or a multiple of $P$. We can use the procedure in Step 2 or Step
2' to find $s.$ The method in Step 2 uses the Exact Quantum Counting algorithm
to find $s$ (See ref[11] for details). The method in Step 2' uses a method of
decreasing sequence of measurements to find $s.$

Step 2 (using the Exact Quantum Counting algorithm):

Let $T$ be such that $T\geq M$ is the smallest power of $2$ greater than $M$.
We form a superposition%

\[
|\varphi_{1}>=\frac{1}{\sqrt{T}}\sum_{x=0}^{T-1}|x>|0>
\]

\noindent and apply the function $g(x)=Max(0,x_{1}-(x+1)P)$ where
$x_{1}=s+r_{1}P$ is our measured value, with $0\leq r_{1}\leq M-1$and put the
values of $g(x)$ into the second register to get%

\[
|\varphi_{2}>=\frac{1}{\sqrt{T}}\sum_{x=0}^{T-1}|x>|g(x)>
\]

\noindent Notice that as $x$ increases from $0$, $g(x)$ is a decreasing
sequence $s+rP$ with $r=(r_{1}-x-1).$ When $g(x)$ dips below $0$ we set
$g(x)=0$ to ensure $g(x)\geq0.$ Now we apply $f$ to $g(x)$ and put the results
into the amplitude to get%

\[
|\varphi_{3}>=\frac{1}{\sqrt{T}}\sum_{x=0}^{T-1}(-1)^{f(g(x))}|x>|g(x)>
\]

Notice that $f(g(x))=1$ when $s\leq g(x)<s+r_{1}P$ and is $0$ elsewhere. We
apply the exact quantum counting algorithm which determines how many values
$f(g(x))=1.$Let this total be $R.$ If $P$ is correct we expect $R=r_{1}$ and
we can determine $s=x_{1}-RP=s+r_{1}P-RP.$ We can then test if we have the
correct pair of values $s,P$ by testing whether $f(s)=1$, $f(s+P)=1$ and
$f(s+(M-1)P)=1.$ If this test fails then $P$ must be an incorrect value and we
must repeat the period finding algorithm.

We use Theorem 8.3.4 of ref[11]: The Exact Quantum Counting algorithm requires
an expected number of applications of $U_{f}$ in $O(\sqrt{(R+1)(T-R+1)}$ and
outputs the correct value $R$ with probability at least $2/3.$

Step 2' (decreasing sequence of measurements method):

Let $T$ be such that $T\geq M$ is the smallest power of $2$ greater than $M$.
We form a superposition%

\[
|\varphi_{1}>=\frac{1}{\sqrt{T}}\sum_{x=0}^{T-1}|x>|0>
\]

and apply the function $g(x)=Max(0,x_{1}-(x+1)P)$ where $x_{1}=s+r_{1}P$ with
$0\leq r_{1}\leq M-1$and put these values into the second register to get%

\[
|\varphi_{2}>=\frac{1}{\sqrt{T}}\sum_{x=0}^{T-1}|x>|g(x)>
\]

Notice that as $x$ increases from $0$, $g(x)$ is a decreasing sequence $s+rP$
with $r=(r_{1}-x-1).$ When $g(x)$ dips below $0$ we set $g(x)=0$ to ensure
$g(x)\geq0.$ Now we apply $f$ to $g(x)$ and put the results into the third
register and then into the amplitude.%

\[
|\varphi_{3}>=\frac{1}{\sqrt{T}}\sum_{x=0}^{T-1}(-1)^{f(g(x))}|x>|g(x)>
\]

Notice that $f(g(x))=1$ when $s\leq g(x)<s+r_{1}P$ and is $0$ elsewhere.

We then run Grover without measurement to amplify the amplitudes and measure
the second register containing $g(x).$

With probability close to 1 we will measure a new value $x_{2}=s+r_{2}P$ with
$0\leq r_{2}<r_{1}.$ We test the values $f(x_{2})=1$ and $f(x_{2}-P)=1.$ If
$f(x_{2}-P)=0$ then either the value of $P$ we are using is wrong or we have
$r_{2}=0$ and $x_{2}=s.$ If we test $f(s)=1$, $f(s+P)=1$ and $f(s+(M-1)P)=1$
then we have the correct $P$ and $s$ otherwise $P$ is wrong. So assuming
$f(x_{2}-P)=1$ we must have either the correct $P$ or a multiple of $P$. We
repeat this algorithm and go to Step 2' replacing the value $x_{1} $ in the
function $g(x)$ with $x_{2}$ etc. As we repeat the algorithm we will measure a
decreasing sequence of values $x_{1},x_{2}...$ that converges to $s.$ This
procedure will eventually terminate with the correct pair of values $P$ and
$s$ or we will determine that we have been using an incorrect value of $P$ and
we must repeat the quantum algorithm for finding putative $P $ and repeat the process.

How many times do we expect to repeat Step 2'? When we make our first
measurement we expect $r_{1}=M/2.$ For our second measurement we expect
$r_{2}=r_{1}/2$ etc. Therefore we expect to repeat this algorithm $O($
$\ln_{2}(M))$ times.

\section{\textbf{Replacing the QFT With a General Unitary Transform U}}

In general, if we had any Oracle $f$ which is $1$ on a set of labels $A$ and
$0$ elsewhere and we replaced the QFT\ with any unitary transform $U$ which
performs the following%

\[
|z>\rightarrow\frac{1}{\sqrt{N}}\sum_{y=0}^{N-1}\alpha(y,z)|y>
\]

\noindent we can compute the $AmpRatio(y)\ =\frac{Amplified-Amplitude(U)}%
{Amplitude(U)}$as follows.

\noindent As before, we have the following state after applying $U_{f}$:%

\[
|\psi>=\frac{1}{\sqrt{N}}\left[  (-2)\sum_{z\in A}|z>+\sum_{z=0}%
^{N-1}|z>\right]
\]

\noindent Next we apply the general unitary transform $U$ to obtain the state%

\[
U|\psi>=\sum_{y=0}^{N-1}\left[  \frac{(-2)}{N}\sum_{z\in A}\alpha
(y,z)+\frac{1}{N}\sum_{z=0}^{N-1}\alpha(y,z)\right]  |y>
\]

\noindent In the Amplified-U case we apply Grover without measurement followed
by $U$ we obtain the state%

\[
|\phi_{k}>=\sum_{y=0}^{N-1}\left[  \frac{(a_{k}-b_{k})}{\sqrt{N}}\sum_{z\in
A}\alpha(y,z)+\frac{b_{k}}{\sqrt{N}}\sum_{z=0}^{N-1}\alpha(y,z)\right]  |y>
\]

\noindent If $\sum_{z=0}^{N-1}\alpha(y,z)=0$ and $\sum_{z\in A}\alpha
(y,z)\neq0$ we get the same $AmpRatio(y)$ formula that we obtained when
$U=QFT$%

\begin{align*}
AmpRatio(y)  &  =\frac{\ \frac{(a_{k}-b_{k})}{\sqrt{N}}\sum_{z\in A}%
\alpha(y,z)+\frac{b_{k}}{\sqrt{N}}\sum_{z=0}^{N-1}\alpha(y,z)}{\frac{(-2)}%
{N}\sum_{z\in A}\alpha(y,z)+\frac{1}{N}\sum_{z=0}^{N-1}\alpha(y,z)}\\
&  =\frac{\ \frac{(a_{k}-b_{k})}{\sqrt{N}}\sum_{z\in A}\alpha(y,z)}%
{\frac{(-2)}{N}\sum_{z\in A}\alpha(y,z)}\\
&  =\frac{\ \frac{(a_{k}-b_{k})}{\sqrt{N}}}{\frac{(-2)}{N}}\\
&  =\frac{(a_{k}-b_{k})}{-2}\sqrt{N}\\
&  =\frac{1}{-2}\left[  \sqrt{\frac{N}{M}}\sin(2k+1)\theta-\sqrt{\frac{N}%
{N-M}}\cos(2k+1)\theta\right] \\
&  =\frac{N}{-2M}\tan\theta\sin2k\theta
\end{align*}

\noindent This gives
\[
\Pr Ratio(y)=\frac{N^{2}}{4M^{2}}\tan^{2}\theta\sin^{2}2k\theta
\]

\bigskip As in the case when U=QFT, we have the following inequality for the
$\Pr Ratio(y)$ for a general U, the increase in the probability due to
amplification when $k=\left\lfloor \frac{\pi}{4\theta}\right\rfloor $ and
making use of $\frac{M}{N}(\frac{N}{N-M})\geq\tan^{2}\theta\sin^{2}%
2k\theta\geq\frac{M}{N}(\frac{N}{N-M})(\frac{N-2M}{N})^{2}$%

\begin{align*}
\frac{N}{4M}(\frac{N}{N-M})  &  \geq\Pr Ratio(y)\geq\frac{N}{4M}(\frac{N}%
{N-M})(1-\frac{2M}{N})^{2}\\
&  \Longrightarrow\Pr Ratio(y)\approx\frac{N}{4M}%
\end{align*}

\textbf{References}

[1] Nakahara and Ohmi, \textquotedblleft Quantum Computing: From Linear
Algebra to Physical Realizations\textquotedblright, CRC Press (2008).

[2] S. Lomonaco, \textquotedblleft Shor's Quantum Factoring
Algorithm,\textquotedblright\ AMS PSAPM, vol. 58, (2002), 161-179.

[3] P. Shor, \textquotedblleft Polynomial time algorithms for prime
factorization and discrete logarithms on a quantum computer\textquotedblright,
SIAM J. on Computing, 26(5) (1997) pp1484-1509 (quant-ph/9508027).

[4] L. Grover, \textquotedblleft A fast quantum mechanical search algorithm
for database search\textquotedblright, Proceedings of the 28th Annual ACM
Symposium on Theory of Computing (STOC 1996), (1996) 212-219.

[5] Hardy and Wright \textquotedblleft An Introduction to the Theory of
Numbers\textquotedblright, Oxford Press Fifth Edition (1979).

[6] S. Lomonaco and L. Kauffman, \textquotedblleft Quantum Hidden Subgroup
Algorithms: A Mathematical Perspective,\textquotedblright\ AMS CONM, vol. 305,
(2002), 139-202.

[7] S. Lomonaco, \textquotedblleft Grover's Quantum Search
Algorithm,\textquotedblright\ AMS PSAPM, vol. 58, (2002), 181-192.

[8] S. Lomonaco and L. Kauffman, \textquotedblleft Is Grover's Algorithm a
Quantum Hidden Subgroup Algorithm?,\textquotedblright\ Journal of Quantum
Information Processing, Vol. 6, No. 6, (2007), 461-476.

[9] G. Brassard, P. Hoyer, M. Mosca and A. Tapp, "Quantum Amplitude
Amplification and Estimation", AMS CONM, vol 305, (2002), 53-74.

[10] M. Nielsen and I. Chuang, "Quantum Computation and Quantum Information",
Cambridge University Press (2000).

[11] P. Kaye, R. Laflamme and M. Mosca, "An Introduction to Quantum
Computing", Oxford University Press (2007).

[12] N. Yanofsky and M. Mannucci, "Quantum Computing For Computer Scientists",
Cambridge University Press (2008).

[13] S. Lomonaco, "A Lecture on Shor's Quantum Factoring Algorithm Version
1.1",quant-ph/0010034v1 9 Oct 2000.

\end{document}